\newtheorem{mydef}{Definition}
\newtheorem{mythm}{Theorem}
\newtheorem{myprob}{Problem}
\newtheorem{remark}{Remark}
\newtheorem{assumption}{Assumption}
\DeclareMathSymbol{\shortminus}{\mathbin}{AMSa}{"39}
\title{\LARGE \bf
Data-Driven Safe Controller Synthesis for Deterministic Systems: \\ A Posteriori Method With Validation Tests}
\author{Yu Chen, Chao Shang, Xiaolin Huang and Xiang Yin% 
\thanks{This work was supported by  the National Natural Science Foundation of China (62061136004, 62173226, 61833012).}
	\thanks{Yu Chen, Xiaolin Huang and Xiang Yin are with Department of Automation and Key Laboratory of System Control and Information Processing, Shanghai Jiao Tong University, Shanghai 200240, China.
	{\tt\small \{yuchen26,  yinxiang,xiaolinhuang\}@sjtu.edu.cn}. 
	Chao Shang is with the Department of Automation,
Beijing National Research Center for Information Science and Technology, Tsinghua University, Beijing 100084, China.
	{\tt\small c-shang@tsinghua. }
	} 
}
\begin{document}
	
\maketitle
\thispagestyle{empty}
\pagestyle{empty}
\setlength{\abovecaptionskip}{0pt}
\setlength{\belowcaptionskip}{3pt}
\setlength{\textfloatsep}{0pt}

\begin{abstract}
In this work, we investigate the data-driven safe control synthesis problem for unknown dynamic systems. 
We first formulate the safety synthesis problem as a robust convex program (RCP) based on notion of control barrier function. To resolve the issue of unknown system dynamic, we follow the existing approach by converting the RCP to a scenario convex program (SCP) by randomly collecting finite samples of system trajectory. However, to improve the sample efficiency to achieve a desired confidence bound, we provide a new posteriori method with validation tests.  Specifically, after collecting a set of  data for the SCP, we further collect another set of independent \emph{validate data} as  posterior information to test the obtained solution.  We derive a new overall confidence bound for the safety of the controller that  connects the original sample data, the support constraints, and the validation data.  The efficiency of the proposed approach is illustrated by a case study of room temperature control.  We show that, compared with existing methods, the proposed approach can significantly reduce the required number of sample data to achieve a desired confidence bound.  
\end{abstract}

\section{Introduction}
With the increasing complexity of engineering cyber-physical systems, ensuring safety has become a top priority in their design. This is particularly important as the consequences of failures or errors in these systems can be severe, ranging from property damage to loss of life. In order to ensure that these systems operate safely and correctly, engineers and developers often turn to \emph{formal methods}. These methods provide a rigorous framework for analyzing and verifying system behavior, and can provide provable guarantees of correctness and safety \cite{tabuada2009verification,belta2017formal}.

In the field of formal synthesis of safe controllers, there has been a significant amount of research conducted in recent years, resulting in the development of various approaches. These approaches can broadly be categorized as either abstraction-based or abstraction-free. Abstraction-based methods involve constructing a finite abstraction of the original system, typically achieved by discretizing the state space \cite{5342460,zamani2014symbolic,lavaei2022automated,liu2022secure}. Symbolic algorithms can then be applied to this abstraction to synthesize a controller, which can subsequently be refined to the original system. However, a significant drawback of this approach is the curse of dimensionality, which limits its suitability for large-scale systems. On the other hand, abstraction-free approaches for safe control synthesis are becoming increasingly popular, with one widely-used method being control barrier functions (CBF) \cite{ames2017control,wang2017safety,jagtap2020formal,nejati2022compositional,xiao2022adaptive,yang2022differentiable,xiao2022high}. Unlike abstraction-based techniques, CBF can directly synthesize a controller to enforce safety   without the need to discretize the state spaces. This approach offers advantages in terms of scalability, making it more feasible for 
high-dimensional systems.

The aforementioned  techniques for safe control synthesis rely on having knowledge of the system model, which can be costly or even impossible for complex systems. To address this issue, recent research has advocated for the use of data-driven approaches. For instance, several techniques have been developed to construct formal abstractions directly from data with confidence guarantees, as described in \cite{makdesi2021data,lavaei2022constructing,peruffo2022data}. These approaches enable the construction of a finite abstraction of the system directly from data, without requiring a priori knowledge of the system model.
Additionally, there are works that combine control barrier functions with collected data to synthesize controllers when the system model is partially or fully unknown; see, e.g.,  \cite{han2015sublinear,robey2020learning,jagtap2020control,lindemann2021learning}. These approaches offer promising avenues for safe control synthesis in scenarios where accurate models may be difficult or impossible to obtain.

Recently, there has been a surge of interest in data-driven verification and synthesis for safety, driven in part by the development of the theory \emph{scenario convex programming} \cite{calafiore2006scenario,esfahani2015performance}. This approach provides a sound method for safety verification or synthesis by connecting the number of sample data to the confidence bound. For instance, for deterministic systems, the safety verification problem has been addressed in \cite{nejati2023formal} for both discrete and continuous-time cases. 
%while the synthesis problem has also been tackled in \cite{nejati2022data} 
Additionally, in \cite{salamati2021adhs}, safety verification for stochastic systems has been investigated, and the results have been extended to the synthesis problem in \cite{salamati2021data}. 
Furthermore,  the wait-and-judge approach    \cite{salamati2022data} and  the repetitive approach   \cite{salamati2022Repetitive} have also been used to improve the sample efficiency of the safety verification problem.

% An optimization problem, called scenario convex program is proposed in~\cite{calafiore2006scenario} as a finite constraints relaxation of robust convex program which has infinitely many constraints. \cite{calafiore2006scenario} prove the relation between solution of scenario program and that of robust program by providing a bound on violation probability of constraint. \cite{esfahani2015performance} further related the optimal value of scenario program to that of robust program. There are some investigation on applying this framework such as invariance verification~\cite{wang2020scenario}, safety verification and synthesis~\cite{salamati2021data,nejati2023formal}. Our framework is similar to safety synthesis in~\cite{salamati2021data}. However, the required sample number of proposed method in~\cite{salamati2021data} is exponential with respect to dimension of system while we make use of posterior information to reduce sample complexity.

In this work, we focus on studying the data-driven control synthesis of unknown discrete-time deterministic systems for safety specifications. Our method also builds upon the tools of control barrier functions and scenario theory.  Specifically, we follow the approach in \cite{salamati2021data} by converting the safety control problem into a robust convex program (RCP) that searches for a control barrier function, ultimately solved by a scenario convex program. However, motivated by the recent results in \cite{shang2021tac}, we introduce a new mechanism called the \emph{validation test} for the control synthesis problem.
Specifically, our approach requires to collect two different data sets:
\begin{itemize}
\item We first collect   $N$ data to formulate the scenario convex program in order to obtain a solution;
\item Then, we collect   $N_0$ independent  \emph{validation data} as \emph{posterior information} to test the obtained solution such that the confidence bound can be further  improved.
\end{itemize}
In contrast to    \cite{salamati2022data} and   \cite{salamati2022Repetitive} for the verification problem, where the information of support constraints number and the information of violation frequency in validation data are used independently, 
here we not only consider the \emph{synthesis} problem, but also use these two posterior information \emph{jointly}. 
Therefore, our main result is an overall performance bound that connects all three information: the original sample data, support constraints, and validation data, in a uniform manner. We show that, compared with existing methods, the proposed approach can significantly reduce the required number of sample data to achieve a desired confidence bound.
%To our knowledge, the idea of validation tests has never been applied to the data-driven formal control synthesis problem.  

% In this work, we  study the data-driven control synthesis of   unknown discrete-time deterministic systems for safety specifications. 
% Our  method is also build upon the tools of  control barrier functions and the scenario theory. 
% In particular, we follow the approach in \cite{salamati2021data} by convert the safety control problem as the robust convex program (RCP) that search for  control barrier function, which is ultimately solve by a scenario convex program. 
% However, motivated by the recent result in \cite{shang2021tac},  we introduce a new mechanism called \emph{validation test} for the control synthesis problem. 
% Specifically, our approach requires to collect two different sets of data: 
% \begin{itemize}
%     \item 
%     We first collect a set of $N$ data to formulate the SCP in order to obtain a solution; 
%     \item 
%     Then we collect a set of $N_0$ independent data to \emph{valid} the solution as \emph{posterior information} in order to improve the confidence bound. 
% \end{itemize}
% We also use the number of support constraints as an additional information to improve the sample efficiency. 
% Our main result is to obtain an overall performance bound that connect the original sample data, support constraints and validation data. We show that, compared with exiting methods,  the proposed approach can significantly reduce the required number of sample data to achieve a desired confidence bound. 

The rest of the paper is organized as follows. In Section II, we provide the necessary preliminaries and formulate the problem. In Section III, we review the existing results on solving data-driven safe control synthesis using scenario convex programs. Our main theoretical contributions are presented in Section IV, where we describe the overall synthesis procedure and derive a new performance bound using posterior information. We demonstrate the sample efficiency of the proposed method in Section V through a room temperature control example. Finally, we conclude the paper in Section VI.

\section{Preliminary and Problem Statement}\label{sec:pre}
\subsection{Notations}
We denote by $\mathbb{R}$, $\mathbb{R}_{0}^{+}$ $\mathbb{N} \coloneqq \{ 1,2,3,\dots \}$ and $\mathbb{N}_0 \coloneqq \{ 0,1,2,\dots \}$  the set of real numbers, non-negative real numbers, positive integers and non-negative integers, respectively. The indicator function is denoted by $\mathbf{1}_{\mathcal{A}}(X) : X \rightarrow \{0,1 \}$ where $\mathbf{1}_{\mathcal{A}}(x)=1$ if and only if $x \in \mathcal{A}$. Given $N$ vectors $x_i\in \mathbb{R}^{n_i}$, $n_i \in \mathbb{N}$ and $i \in \{1,\dots, N \}$, we denote by $[x_1;\dots ; x_n]$ and $[x_1,\dots,x_n]$ the corresponding column and row vectors, respectively.
%Given a sequence $x=x_1x_2\dots x_n$, $\lvert x \rvert$ is denoted by length of $x$.
We denote by $\Vert x \Vert$ and $\Vert x \Vert_{\infty}$ the Euclidean norm and infinity norm of $x \in \mathbb{R}^n$, respectively. The induced norm of matrix $A \in \mathbb{R}^{m \times n}$ is defined by $\Vert A \Vert = \sup_{\Vert x \Vert = 1}\Vert Ax \Vert$.

We consider a probability space with the tuple $(\Delta,\mathcal{F},\mathbb{P})$ where $\Delta$ is the sample space, $\mathcal{F}$ is a $\sigma$-algebra on $\Delta$ and $\mathbb{P}$ is a probability measure defined over $\mathcal{F}$. Given $N \in \mathbb{N}$, $\mathbb{N}_0 \ni m \leq n$, and $t \in (0,1)$, the Beta cumulative probability function is defined as $B_{N}(t;m) \coloneqq \sum_{i=0}^{m} \binom{N}{i}t^{i}(1-t)^{N-i}$.

% Given a random variable $Z$, we denote by $\mathbb{E}(Z)$ the expectation of $Z$. The variance of $Z$ is denoted by $\textsf{var}(Z) \coloneqq \mathbb{E}(Z^2)- (\mathbb{E}(Z))^2$. 

\subsection{System Model}
We consider a discrete-time dynamical system  (dt-DS)
\[
\mathbf{S}=(X,U,f), 
\]
where
$X \subseteq \mathbb{R}^{n}$ is a Borel space  representing state space of system, 
$U \subseteq \mathbb{R}^{m}$ is a set of control inputs 
and  $f: X \times U \rightarrow X$ is an \emph{unknown} function describing the dynamic of the system. 
% Given an initial state $x(0) \in X$ and a sequence of control inputs $u(0),u(1),\dots\in U$, the state evolution of the system is given by
% \begin{equation}
% x(t+1)=f(x(t),u(t)),   t=0,1,\dots.
% \end{equation}
A (static state-feedback) controller is a mapping 
$C: X\to U$ 
that determines the control input based on the current state.  Given a controller $C$ and initial state $x(0) \in X$, the trajectory of the system is defined by 
\[
\zeta(x(0))=x(0)x(1)\dots x(n) \dots,
\]
where $x(t+1)=f(x(t),C(x(t)))$ for all $t=0,1,\dots$. 
For any $T \in \mathbb{N}$, 
we denote by $\zeta_{T}(x(0))=x(0)x(1)\cdots x(T)$ the finite prefix of trajectory $\zeta(x(0))$ of length $T+1$.  
We denote by $\mathbf{S}_{C}$ the closed-loop system under control. 
% XY: L is not a good notation because it is used of Lipshiz
We assume that the control input set is described as a polytope, i.e., 
\begin{equation} \label{eq:input}
    U=\{ u \in \mathbb{R}^{m} \mid  A u \leq B \}
\end{equation}
 where $A \in \mathbb{R}^{r \times m}$,  $B \in \mathbb{R}^{r}$ and $r\in \mathbb{N}$.

Although the dynamic function $f$ is   unknown,  we assume that we can \emph{simulate} the system by selecting  initializing the system at state $x\in X$, applying input $u\in U$ and observing the next state state $x'\in X$ of the system. 
Such a  tuple $(x,u,x')$ is referred to as a data. 
Suppose that we assign a distribution $\mathbb{P}$, where $\Delta=X \times U$, to sample $N$ i.i.d.\ pair $(x_i,u_i)$. Then the collected dataset $\mathcal{D} $ is 
 \begin{equation} \label{eq:dataset}
   \mathcal{D}:=  \{ (x_i,u_i,f(x_i,u_i)) \mid i=1,\dots,N  \}\subseteq X \times U \times X. 
 \end{equation}

% In this work, we are interested in synthesizing a controller $C$ such that controlled system $\mathbf{S}_{C}$ satisfies safety specification. Formally, a controller is given by $C=(C_0,C_1,\dots,C_n,\dots)$ with $C_k: H_k \rightarrow U$, where $H_n$ is a set of all $n$-histories $h_n=(x(0),u(0),x(1),u(1),\dots,x(n-1),u(n-1))$. A controller $C$ is called stationary if $C_0=C_i$ for all $i \in \mathbb{N}_0$. The stationary controller only depends on current state at every time instant and does not change over time. We only consider stationary controller in this work. Assume that the controller is stationary in the rest of paper.

%Given stationary controller $C$ and initial state $x_0$, the probability of trajectory $x=x_0x_1\dots x_n$ is defined by $\textsf{Pr}_{C}^{x_0}(x)=\prod_{i=0}^{n-1} \textsf{Pr}_{C}(x_{i+1}|x_i)$, where
%\[
%\textsf{Pr}_{C}(x_{i+1}|x_i)=\mathbb{P}_{w}(\{ \delta \in \Delta \mid x_{i+1} = f(x_i,C(x_{i}),v), w(i)(\delta)=v \})
%\]

%\subsection{Safety Specification}
%For ds-SCS $\mathbf{S}=(X,V,U,w,f)$ and property $\varphi=(X_0,X_u,T)$ where $X_0$, $X_u \subseteq X$ and $T \in \mathbb{N}$, we define the safety trajectory set initial from $x_0$ by
%\[
%S_{x_0}=\{ x_0x_1\dots x_{T-1} \in X^* | \forall i \in \{ 0,1,\dots , T-1 \}, x_i \not \in X_u  \}.
%\]

%We say that $\mathbf{S}$ satisfies $\varphi$ under controller $C$ within horizon $T$, denoted by $\mathbf{S}_{C} \models_{T} \varphi$, if $\forall x_0 \in X_0$, $\mathbb{P}_{C}^{x_0}(S_{x_0})=1$. Without loss of generality, we assume that $X_0 \cap X_u = \emptyset$.

\subsection{Problem Statement}
Given a dt-DS $\mathbf{S}=(X,U,f)$ and a $3$-tuple \emph{property}  
\[
\varphi=(X_0,X_u,T), 
\]
where 
$X_0  \subseteq X$ denotes the initial region, 
$X_u  \subseteq X$ denotes the unsafe region,  
and $T$ denotes the \emph{horizon} of the property. 
We assume that $X_0\cap X_u=\emptyset$. 
We say a trajectory is \emph{safe} if it does not contain an unsafe state in $X_u$, 
and we denote the set of safety trajectories w.r.t.\ $\varphi$ by
\[
\Xi_{\varphi}=\{ x_0\dots x_{T} \in X^T \mid \forall i \in 0,\dots,T\text{ s.t }x_i \not \in X_u  \}.
\]

Given controller $C$, we say that the closed-loop system $\mathbf{S}_{C}$ satisfies property $\varphi$, denoted by $\mathbf{S}_{C} \models \varphi$  if 
\[ 
 \forall x(0) \in X_0: \zeta_{T}(x(0)) \in \Xi_{\varphi}.
 \] 
The problem that we  solve in this work is stated as follows.
\begin{myprob} \label{problem}
Consider an unknown dt-DS $\mathbf{S}=(X,U,f)$ and a safety  property $\varphi=(X_0,X_u,T)$. Using data in the form of~(\ref{eq:dataset}) to find a controller $C : X \rightarrow U$ such that $\mathbf{S}_{C} \models \varphi$ with a confidence of $(1-\beta) \in [0,1]$, i.e.,
\[
\mathbb{P}^{N}(\mathbf{S}_{C} \models \varphi) \geq 1-\beta, 
\]
where $\mathbb{P}^{N}$ is the $N$-cartesian product of distribution $\mathbb{P}$.
\end{myprob}

\section{Scenario Approach using  Barrier Certificates}
The problem described in Problem 1 has already been addressed in the literature by \cite{salamati2021data}. 
The basic idea is to use control barrier functions (CBF) as a sufficient condition for ensuring safety properties, and then to solve a convex program to identify candidate CBFs through a scenario approach. We will briefly describe the existing method since our new approach builds upon it.
 
\begin{mydef}[control barrier functions] \label{def:CBF}
Given a dt-DS $\mathbf{S}=(X,U,f)$  and  property $\varphi=(X_0,X_u,T)$, a function $\mathcal{B} : X \rightarrow \mathbb{R}$ is said to be a  \emph{control barrier function} (CBF) for $\mathbf{S}$ and $\varphi$ if there exist constants $\lambda,\gamma \in \mathbb{R}$, $c \geq 0$, and functions $F_{\imath}(x) : X \rightarrow \mathbb{R},\imath=1,\dots,m $ with $[F_1(x);\dots;F_m(x)] \in U$ such that
\begin{align}
   &\mathcal{B} (x) < \gamma, \qquad   \qquad  \forall x 
   \in X_0, \label{barrier:condition 1} \\  
   &   \mathcal{B} (x) \geq \lambda,  \qquad \qquad  \forall x 
   \in X_u, \label{barrier:condition 2} \ \\
&   \mathcal{B} (f(x,u))+\sum_{\imath=1}^{m}(u_\imath-F_\imath(x)) \leq \mathcal{B} (x)+c,  \label{barrier:condition 3} \\
 & \qquad \qquad \qquad \qquad \   \forall x \in X, \forall (u_1,\dots,u_m) \in U,     \nonumber \\
 &    \lambda-\gamma \geq c T , 
\end{align}
\end{mydef}
\medskip

As shown in \cite{salamati2021data}, for any dt-DS, if we can find a CBF and its associated parameters, 
then controller $C:X \to U$ defined by 
\begin{equation} \label{eq:controller}
    C(x)=[F_1(x);\dots;F_m(x)],\forall x\in X.
\end{equation}
ensures the satisfication of $\mathbf{S}_{C} \models \varphi$. 

To identify a suitable CBF, a commonly adopted approach is to search among candidate polynomial functions.
Specifically, a polynomial CBF $ \mathcal{B} (q,x)$ 
with degree $k \in \mathbb{N}$ is of form 
\begin{equation} \label{eq:CBFcoe}
    \mathcal{B} (q,x)=\sum_{a_{1}=0}^{k} \dotsc \sum_{a_{n}=0}^{k} q_{a_1,\dots,a_n}(x_1^{a_1}x_2^{a_2}\dotsc x_n^{a_n}),
\end{equation}
where $q$ is is the vector for all coefficients 
and for $\sum_{i=1}^{n}a_{i} > k$, we have  $q_{a_1,\dots,a_n}=0$.
Similarly,  for each $\imath\in \{1,\dots,m\}$, a polynomial function $F_{\imath}(p_\imath,x)$ with degree $k_\imath$ is of form
\begin{equation} \label{eq:inputcoe}
    F_{\imath}(p_{\imath},x)=\sum_{a_1=0}^{k_{\imath}} \dotsc \sum_{a_m=0}^{k_{\imath}} p^{\imath}_{a_1,\dots,a_m}(x_1^{a_1}x_2^{a_2}\dotsc x_m^{a_m}),
\end{equation}
where $p_\imath$ is the vector for all coefficients 
and for $\sum_{i=m}^{n}a_{i} > k_\imath$, we have  $p^{\imath}_{a_1,\dots,a_m}=0$.  
We define $p=(p_1,\dots,p_m)$ as the overall coefficient vector. 

Then by restricting to candidate polynomial functions, one can synthesize a CBF-based controller by 
solving the following Robust Convex Program (RCP):  
\begin{align}
\text{RCP: } \left\{
\begin{aligned}
& \min_{d}\ \  K  \\
  \text{s.t.} \ \  &   \!\!\!\!  \max_{z\in \{1,2,3,4 \}}g_z(x,u,d) \leq 0,  \forall x \in x,  u \in U, \\ 
& d=(K,\lambda,\gamma,c,q,p), \\
& K \in \mathbb{R}, c\geq 0, \lambda - \gamma \geq cT
\end{aligned}
\right.
\label{eq:RCPconstraints}
\end{align} 
where
\begin{align}
     &  g_1(x,d)= (\mathcal{B} (q,x)-\gamma)\mathbf{1}_{X_0}(x), \\
     & g_2(x,d)= (-\mathcal{B} (q,x)+\lambda)\mathbf{1}_{X_u}(x), \\
    %& g_4(x,d)= \frac{1+cT}{\rho} -\lambda- K, \\
    %& g_4(d) = cT - \lambda + 1 - K \\
    & g_3(x,u,d)= \mathcal{B} (q,f(x,u))-\mathcal{B} (q,x) \\
    & \qquad \qquad \qquad + \sum_{\imath=1}^{m}(u_{\imath}-F_{\imath}(p^\imath,x)) -c-K, \label{eq:unknownexpect} \\
    & g_{4}(x,d) = \max_{i \in \{ 1,\dots,r \} } ( A[F_{1}(p_{1},x);\dots;F_{m}(p_{m},x)]-B )_i.
\end{align}

Intuitively,  if the optimal value for the above RCP, denoted by $K^*$, satisfies $K^* < 0$, then we know that $\mathcal{B}(q,x)$ is a valid CBF with associated control functions $F(p,x)=[F(p_1,x);\dots; F(p_m,x)]$. Specifically,  $g_1(\cdot)$-$g_3(\cdot)$ ensure that $\mathcal{B}(q,x)$ satisfies definition of CBF
and $g_4(\cdot)$ enforces that the selected control inputs are within the polytope  defined in (\ref{eq:input}).  
For technical purpose, we further assume that all constraints are Lipschitz continuous with respect to $x$ and $u$, 
and we denote by $L>0$ the Lipschitz constant for all $g_z,z=1,2,3,4$. 

However, the above RCP-based approach can only be used when the dynamic function $f$ is known. 
When $f$ is unknown, one can make use of the  collected dataset $\mathcal{D}$ in (\ref{eq:dataset}) to solve the RCP using the \emph{scenario approach}.  
Specifically, one needs to replace constraint $g_3(x,u,d)$ that should hold \emph{for all} $x\in X$ and $u\in U$ 
by a set of $N$ constraints based on the sampled data. This leads to the following Scenario Convex Program (SCP) 
\begin{align}
\text{SCP}_{N}\text{: } \left\{
\begin{aligned}
& \min_{d}\ \  K  \\
  \text{s.t.} \ \  &   \!\!\!\!  \max_{z\in \{1,2, 4 \}}\{ g_z(x,d),g_3(x_i,u_i,d)\} \leq 0, \\
                   &\quad \quad \quad \quad \quad  \forall x \in x, \forall i\in \{1,\dots, N\} \\ 
& d=(K,\lambda,\gamma,c,q,p), \\
& K \in \mathbb{R}, c\geq 0, \lambda - \gamma \geq cT
\end{aligned}
\right.
\label{eq:SCPconstraints}
\end{align} 
For the above SCP$_{N}$, we assume that the optimal solution exists and is unique for any possible number $N$ of samples, which is a standard assumption in the literature; see, e.g.,  \cite{calafiore2005uncertain} for more discussion on this assumption.

Note that the decision variables are same in RCP and $\text{SCP}_{N}$. In the rest of paper, given a solution $\hat{d}$, we denote by $\hat{d} \models \mathcal{O}$ if $\hat{d}$ is a feasible solution of $\mathcal{O}$ where $\mathcal{O}$ is a optimization problem.

The following result established in \cite{salamati2021data} shows how to solve Problem~1 based on SCP$_N$. 
\begin{comment}
With assumption above, the authors in \cite{salamati2021data} proved a rior guarantee of safety synthesis, which is stated as below.
\end{comment}

\begin{mythm}[\!\!\cite{salamati2021data}] \label{lemma:prior}
Given   dt-DS $\mathbf{S} = (X,U,f)$ with unknown $f$ and safety property $\varphi=(X_0,X_u,T)$. 
Let $d^{*}_{N}=(K^{*}_{N},\lambda^*,\gamma^*,c^*,q^*,p^*)$ the optimal solution to $\text{SCP}_{N}$ 
and  $C(x)=[F_1(p^{*}_{1},x);\dots; F_m(p^{*}_{m},x)]$ be the associated controller. 
Then we have $\mathbf{S}_{C} \models \varphi$  with a confidence of at least $1-\beta$ if, for some $\epsilon \in [0,1]$, we have  
\begin{align} 
 N \geq   N(\epsilon,\beta) \text{ and }  K^{*}_{N}+L \mathcal{U}^{-1}(\epsilon)  \leq 0, \label{eq:testequation}
\end{align}
where 
\[
N(\epsilon,\beta) := \min\left\{\!\! N \in \mathbb{N} \,\middle\vert\,  \sum_{i=0}^{Q+P+3} \binom{N}{i} (\epsilon)^i(1-\epsilon)^{N-i} \leq \beta \right\}
\]
with $Q$ and $P$ the number of coefficients in the CBF and  the total number of coefficients in control functions, respectively, 
and  $\mathcal{U}(r) : \mathbb{R}_{0}^{+} \rightarrow [0,1]$ is a function related to geometry of $X \times U$ and sampling distribution $\mathbb{P}$.
\end{mythm}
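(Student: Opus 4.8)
The plan is to fuse the probabilistic feasibility guarantee of the scenario approach with a deterministic Lipschitz-covering argument that upgrades an ``in-measure'' satisfaction of the sampled constraint into an exact robust satisfaction of the corresponding barrier inequality. I would organize the argument in three stages. First, I would isolate the role of sampling: among the constraints of $\text{SCP}_N$, only $g_3(x_i,u_i,d)\le 0$ depends on the unknown dynamics $f$ and is therefore replaced by $N$ sampled instances, whereas $g_1,g_2,g_4$ together with $c\ge 0$ and $\lambda-\gamma\ge cT$ are deterministic and are enforced exactly. The decision vector $d=(K,\lambda,\gamma,c,q,p)$ has $Q+P+4$ components, so the number of support constraints of $\text{SCP}_N$ is at most $Q+P+4$. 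Applying the classical scenario bound of \cite{calafiore2006scenario}, I would conclude that whenever $N\ge N(\epsilon,\beta)$, i.e. whenever $\sum_{i=0}^{Q+P+3}\binom{N}{i}\epsilon^{i}(1-\epsilon)^{N-i}\le\beta$, the optimizer $d^{*}_{N}$ satisfies, with $\mathbb{P}^{N}$-probability at least $1-\beta$,
\[
\mathbb{P}\bigl(\{(x,u)\in\Delta : g_3(x,u,d^{*}_{N})>0\}\bigr)\le\epsilon .
\]

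Second, I would convert this measure bound into a pointwise bound on $g_3$. Interpreting $\mathcal{U}(r)$ as a uniform lower bound on the $\mathbb{P}$-measure of a radius-$r$ ball in $X\times U$, the radius $\mathcal{U}^{-1}(\epsilon)$ has the property that no ball of radius exceeding it can be contained in a set of measure at most $\epsilon$. Hence every violating point $(x,u)$ with $g_3(x,u,d^{*}_{N})>0$ admits a point $(x',u')$ with $g_3(x',u',d^{*}_{N})\le 0$ at distance at most $\mathcal{U}^{-1}(\epsilon)$. Using the Lipschitz constant $L$ of $g_3$, I would then estimate
\[
g_3(x,u,d^{*}_{N})\le g_3(x',u',d^{*}_{N})+L\,\mathcal{U}^{-1}(\epsilon)\le L\,\mathcal{U}^{-1}(\epsilon),
\]
and since the bound is trivial on the non-violating set, this yields $\sup_{(x,u)}g_3(x,u,d^{*}_{N})\le L\,\mathcal{U}^{-1}(\epsilon)$ on the same probability-$(1-\beta)$ event.

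Third, I would assemble a genuine CBF. By construction $g_3(x,u,d)$ equals the residual of inequality~(\ref{barrier:condition 3}) minus the slack $K$, so the supremum bound above gives that this residual is bounded by $K^{*}_{N}+L\,\mathcal{U}^{-1}(\epsilon)$. The hypothesis $K^{*}_{N}+L\,\mathcal{U}^{-1}(\epsilon)\le 0$ therefore forces~(\ref{barrier:condition 3}) to hold for all $x\in X$ and $u\in U$. Since~(\ref{barrier:condition 1}),~(\ref{barrier:condition 2}), the polytope constraint encoded by $g_4$, and $\lambda-\gamma\ge cT$ are satisfied exactly by feasibility of $d^{*}_{N}$, the tuple $(\mathcal{B}(q^{*},\cdot),\lambda^{*},\gamma^{*},c^{*})$ with control functions $F_{\imath}(p^{*}_{\imath},\cdot)$ is a valid CBF in the sense of Definition~\ref{def:CBF}. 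Invoking the consequence of \cite{salamati2021data} recalled above, namely that the controller~(\ref{eq:controller}) associated with any valid CBF enforces $\mathbf{S}_{C}\models\varphi$ deterministically, and noting that the whole construction succeeds on an event of $\mathbb{P}^{N}$-probability at least $1-\beta$, I would obtain $\mathbb{P}^{N}(\mathbf{S}_{C}\models\varphi)\ge 1-\beta$.

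I expect the main obstacle to be the second stage, namely making the geometric covering argument rigorous. This requires pinning down $\mathcal{U}$ as a correct modulus relating the radius of a ball to its minimal $\mathbb{P}$-measure, establishing the monotonicity and invertibility of $\mathcal{U}$, and being careful with strict versus non-strict inequalities and open versus closed balls, so that the passage ``measure $\le\epsilon$ implies every point lies within $\mathcal{U}^{-1}(\epsilon)$ of a feasible point'' holds without an off-by-a-boundary-case error.
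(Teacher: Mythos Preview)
The paper does not actually prove this statement: Theorem~\ref{lemma:prior} is merely quoted from \cite{salamati2021data} with no accompanying argument, so there is nothing to compare your proposal against directly. Your three-stage sketch is nonetheless a faithful reconstruction of how such a result is obtained, and the counting of $Q+P+4$ decision variables matching the index $Q+P+3$ in the Beta tail is correct.

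It is worth contrasting your route with the proof the paper does give for its own Theorem~\ref{thm:main}, since that is the analogous result. There the authors do not unpack the geometric covering argument as you do in Stage~2; instead they pass through an intermediate chance-constrained program $\text{CCP}_\epsilon$, invoke feasibility of $d^*_N$ for $\text{CCP}_\epsilon$ with high probability, then use Lemma~3.2 and Lemma~3.4 of \cite{esfahani2015performance} to transfer feasibility to a relaxed robust program $\text{RCP}_{h(\epsilon)}$ and finally bound $K^*_{\text{RCP}}$ by $K^*_N + L\mathcal{U}^{-1}(\epsilon)$ via the uniform level-set bound $h(\epsilon)=L\mathcal{U}^{-1}(\epsilon)$ (their Remark~3.8). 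Your Stage~2 is precisely the content of those lemmas spelled out by hand: the ball-measure interpretation of $\mathcal{U}$ and the Lipschitz estimate are exactly what underlies the abstract ``uniform level-set bound'' machinery. So your approach is more elementary and self-contained, while the paper's is more modular, delegating the geometry to \cite{esfahani2015performance} and working at the level of optimal values of a hierarchy $\text{SCP}_N \to \text{CCP}_\epsilon \to \text{RCP}_{h(\epsilon)} \to \text{RCP}$ rather than pointwise constraint residuals. Either packaging yields the same conclusion; the obstacle you flag (monotonicity and invertibility of $\mathcal{U}$, boundary cases) is handled in \cite{esfahani2015performance} under mild regularity assumptions on $\mathbb{P}$ and $X\times U$.
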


 \begin{remark} \label{remark:uniform set}
 The reader is referred to  \cite{esfahani2015performance} for the general relationship 
 among  function $\mathcal{U}(\cdot)$, distribution $\mathbb{P}(\cdot)$  and space $X\times U$. 
 Particularly, if the the sampling distribution $\mathbb{P}$ is uniform over $X \times U$ and $X \times U$ is $n$-dimensional hyper-rectangular, then function $\mathcal{U}$ is given by \cite{nejati2023formal}
 \[
 \mathcal{U}(r)=\frac{\pi^{\frac{n}{2}}r^{n}}{2^{n}\Gamma(\frac{n}{2}+1)\mathbf{Vol}(X \times U)}
 \]
 where $\Gamma(\frac{n}{2}+1)=(\frac{n}{2}+1)!$ when $n$ is even and $\Gamma(\frac{n}{2}+1)=\frac{n}{2} \times (\frac{n}{2}-1) \times  \cdots \times \frac{1}{2}$ otherwise, and $\mathbf{Vol}(\cdot)$ denotes the volume of a set.
 \end{remark}
 
%  \begin{remark}
%  The dynamic system considered in this work is non-stochastic, the probabilistic confidence bound in Lemma~\ref{lemma:prior} is due to randomly collect data from system.
%  \end{remark}

% When we consider uniform sample distribution $\mathbb{P}$ and rectangular state space, from remark~\ref{remark:uniform set} we know that the required sample number is exponential with respect to dimension of system. However, the approach in Lemma~\ref{lemma:prior} does not use Posteriori information, which we will discuss useful in next subsection, to reduce sample complexity.

% In this work, we assume that the constraints are Lipschitz continous with respect to $x$ and $u$, which is stated formally as follows:
%  \begin{assumption} \label{assumption}
% Function $g_3$ is Lipschitz continuous with respect to $(x,u)$ with Lipschitz constant $\text{L}_3$. Function $g_1$, $g_2$, $g_4$ are also Lipschitz continuous with respect to $x$ with Lipschitz constant $\text{L}_1, \text{L}_2, \text{L}_4$, respectively. We denote by $\text{L}_{x,u}$ the maximum value of all these Lipschitz constants.
% \end{assumption}

%  \begin{remark}
%  The RCP is a robust convex optimization. The constraints and objective function are convex with respect to variables $d$. It is a robust optimization because all constraints needs to hold for all $x \in X$.
%  \end{remark}

\section{Main Results using Posterior Information} \label{section:synthesis}

In the previous section, we reviewed existing methods that provide a sound data-driven solution to Problem 1. However, as noted in Remark \ref{remark:uniform set}, the number of sample data required to achieve a desired confidence bound is generally exponential with respect to the dimension of the system. Therefore, the question naturally arises: How can we improve the sampling efficiency of the synthesis procedure? To address this issue, we present a more efficient method that leverages additional information.

In the context of SCP, there are two additional posteriori information that are closely related to the performance bound of the program: 
\begin{itemize}
    \item 
    one is the \emph{support constraint} whose removal can improve the optimal value of the SCP; 
    \item 
    the other is the \emph{violation frequency} of a new set of validation data.
\end{itemize}
As shown in \cite{shang2021tac}, these two posteriori information can be leveraged together to improve  the sample efficiency in order to achieve a desired performance bound. In this section,  we show how these information can be used in the context of data-driven control synthesis. 

First, we review the definition of  support constraint.
\begin{mydef}[Support Constraint,\cite{campi2008exact}] \label{def:support constraint}
For a scenario convex program $\text{SCP}_{N}$ and $i \in \{ 1,\dots,N \}$, 
constraint $g_{3}(x_i,u_i,d) \leq 0$ is said to be a \emph{support constraint} if the removal of the constraint improves optimal value of $\text{SCP}_{N}$.
\end{mydef}

Intuitively,  the number of support constraints characterizes the complexity of $\text{SCP}_{N}$.  
As shown in \cite{garatti2022risk}, the number of support constraints is upper bounded by the number of decision variables.
Furthermore, if the number of support constraints is much smaller than the number of decision variables, the complexity of $\text{SCP}_{N}$ is much lower than we guess in a prior. It means that we can provide the same guarantee by less samples.

The concept of \emph{violate frequency} arises in the \emph{validation test procedure}. 
Specifically,  suppose that we form an $\text{SCP}_{N}$ from a set $\mathcal{D}$ of $N$ sample data 
and let $d^{*}_{N}$ be the optimal solution to $\text{SCP}_{N}$. 
The validation test requires a new set $\mathcal{D}'$ of $N_0$ independent samples  of state-input pair $\{ (x'_1,u'_k),\dots,(x'_{N_0},u'_{N_0}) \}$.  Then the violation frequency is defined as follows. 

\begin{mydef}[Violation Frequencies,\cite{10.1214/14-EJS909}] \label{def:violationfrequency}
Let $d^{*}_{N}$ be the optimal solution to $\text{SCP}_{N}$ formed  by data set $\mathcal{D}$ with $N$ samples. 
Let $\mathcal{D}'$ be a set of $N_0$ independent new samples. 
Then the  \emph{violation frequency} with respect to $N_0$ and $d^{*}_{N}$ is defined by
\begin{equation} \label{eq:fre}
    R_{N_0}= \sum_{k=1}^{N_0} v(k),
\end{equation}
where $v(k)$ is the   the violation indicator of $d^{*}_{N}$  for the $k$-th sample defined by 
\begin{align}\label{eq:validation}
	v(k) = 
		\left\{
		\begin{array}{ll}
			0 & g_3(x'_k,u'_k,d^{*}_{N}) \leq 0  \\
			1   & \qquad \text{otherwise}
		\end{array}
		\right.   .
\end{align} 
\end{mydef}\medskip

Before we provide our main result, we make the following assumption regarding SCP$_N$. 

\begin{assumption}(Non-degeneracy\cite{campi2018wait}) \label{assumption:nonde}
The solution to $\text{SCP}_{N}$ coincides with probability $1$ with the solution to the program only defined by support constraints.
\end{assumption}

The above assumption is a very mild one for convex programs. It effectively rules out situations where the solution of the program with only support constraints lies on the boundaries of other constraints with a non-zero probability.

% Intuitively, the validation test is also a posteriori information about the $\text{SCP}_{N}$. Specifically, the violation frequency can also provide a guarantee bound between solution of scenario program and robust program \cite{10.1214/14-EJS909}.
% \begin{comment}
% \begin{remark}
% Note that if some sample violates scenario convex program, it does not mean that safety specification is violated. Because the optimal objective value may be negative.
% \end{remark}
% \end{comment}

%establish a probabilistic bride between the solution of $\text{SCP}_{N,\hat{N}}$ and satisfaction of safety specification of dt-SCS $\mathbf{S}$

% Assume that $d^{*}_{N}$ is optimal solution of $\text{SCP}_{N}$. We denote by $N^{*}$ the number of support constraints. Now we state the main result of this work, which connects the safety of a controlled system with the optimal solution $d^{*}_{N}$, violation frequency $R_{N_0}$, and number of supported constraints $N^{*}$ of $\text{SCP}_{N}$.

Now, let $d^{*}_{N}$ be the optimal solution of $\text{SCP}_{N}$ with $N^{*}$ the number of support constraints.  
The following main result of this paper establishes the connection between the safety of a controlled system and the optimal solution of   $\text{SCP}_{N}$,  its number of support constraints and the violation frequency of a new set of data.

\begin{mythm} \label{thm:main}
Given   dt-DS $\mathbf{S} = (X,U,f)$ with unknown $f$ and safety property $\varphi=(X_0,X_u,T)$. 
Let $d^{*}_{N}=(K^{*}_{N},\lambda^*,\gamma^*,c^*,q^*,p^*)$ the optimal solution to $\text{SCP}_{N}$ formed by a set $\mathcal{D}$ of $N$ data and $N^{*}$ be the number of support constraints. 
Let $\mathcal{D}'$ be a collection of $N_0$ new independent data and $R_{N_0}$ be the violation frequency w.r.t.\ $N_0$ and $d^{*}_{N}$. 
Let   $C(x)=[F_1(p^{*}_{1},x);\dots; F_m(p^{*}_{m},x)]$ be  controller associated with $d^{*}_{N}$. 
Then we have $\mathbf{S}_{C} \models \varphi$  with a confidence of at least $1-\beta$ if 
\begin{equation} \label{eq:check_condition}
    K^{*}_{N}+L \mathcal{U}^{-1}(1-\kappa^{*}) \leq 0,
\end{equation}
where $\kappa^{*}$ is the unique solution of 
\begin{equation} \label{eq:support}
     \frac{\beta}{N+1} \sum_{i=N^{*}}^{N} \binom{i}{N^{*}} \kappa^{i-N}- \binom{N}{N^{*}} B_{N_0}(1-\kappa;R_{N_0}) = 0.
\end{equation}
\end{mythm}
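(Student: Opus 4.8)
The plan is to separate the argument into a deterministic implication (small violation probability plus the acceptance test (\ref{eq:check_condition}) forces the robust CBF inequalities) and a probabilistic bound (the posterior data make the true violation probability small with confidence $1-\beta$). Throughout, write $V(d^{*}_{N})=\mathbb{P}\{(x,u)\in X\times U : g_3(x,u,d^{*}_{N})>0\}$ for the probability that a fresh pair violates the sampled constraint, and work on the product of the $N$ training draws and the $N_0$ validation draws.

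First I would establish the pointwise consequence of a small violation probability: if $V(d^{*}_{N})\le\epsilon$, then $\sup_{x,u} g_3(x,u,d^{*}_{N})\le L\,\mathcal{U}^{-1}(\epsilon)$. This is the isoperimetric/covering estimate behind Theorem~\ref{lemma:prior}: at any point where $g_3=v>0$, $L$-Lipschitzness forces $g_3>0$ on the whole ball of radius $v/L$ about it, so that ball (intersected with the domain) lies in the violation set; since $\mathcal{U}(\cdot)$ lower-bounds, uniformly over centers, the probability mass of a ball of that radius, we get $\mathcal{U}(v/L)\le V(d^{*}_{N})\le\epsilon$, hence $v\le L\mathcal{U}^{-1}(\epsilon)$. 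Taking $\epsilon=1-\kappa^{*}$ and combining with (\ref{eq:check_condition}) yields $\mathcal{B}(q^{*},f(x,u))-\mathcal{B}(q^{*},x)+\sum_\imath(u_\imath-F_\imath(p^{*},x))-c^{*}\le K^{*}_{N}+L\mathcal{U}^{-1}(1-\kappa^{*})\le 0$ for all $x,u$, i.e.\ (\ref{barrier:condition 3}). As $g_1,g_2,g_4$ are model-free and already enforced for all $x$ in $\text{SCP}_{N}$, $\mathcal{B}(q^{*},\cdot)$ is then a genuine CBF and the associated $C$ guarantees $\mathbf{S}_{C}\models\varphi$ by \cite{salamati2021data}. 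Hence the failure event (test passes yet $\mathbf{S}_{C}\not\models\varphi$) is contained in $\{V(d^{*}_{N})>1-\kappa^{*}\}$, and it suffices to prove $\mathbb{P}\{V(d^{*}_{N})>1-\kappa^{*}\}\le\beta$.

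For this probabilistic core I would invoke the universal joint law of the pair (support-constraint count $s^{*}$, violation probability $V$) guaranteed by the non-degeneracy Assumption~\ref{assumption:nonde}, which under \cite{campi2018wait} is distribution-free and for which the conditional tail of $V$ given $s^{*}=k$ takes the explicit binomial form $\mathbb{P}\{V>\epsilon\mid s^{*}=k\}=B_{N}(\epsilon;k)$. The validation data enter because, conditioned on $V=t$, the count $R_{N_0}$ is Binomial$(N_0,t)$ and conditionally independent of the training draw, so $\mathbb{P}\{R_{N_0}\le R\mid V=t\}=B_{N_0}(t;R)$. Following \cite{shang2021tac}, I would expand $\mathbb{P}\{V>1-\kappa^{*}\}=\sum_{k}\mathbb{P}\{s^{*}=k,\,V>1-\kappa^{*}(k,R_{N_0})\}$, weight each support layer by this binomial validation likelihood, and allocate a risk budget $\beta/(N+1)$ to each of the $N+1$ possible support cardinalities. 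Carrying out the resulting Beta integral against the binomial weight and applying the hockey-stick identity $\sum_{i=k}^{N}\binom{i}{k}=\binom{N+1}{k+1}$ together with the incomplete-Beta/binomial identity above, the per-layer balance equation collapses precisely into (\ref{eq:support}): the left term $\frac{\beta}{N+1}\sum_{i=N^{*}}^{N}\binom{i}{N^{*}}\kappa^{i-N}$ is the pure wait-and-judge budget and the right term $\binom{N}{N^{*}}B_{N_0}(1-\kappa;R_{N_0})$ is that budget sharpened by the validation CDF; note that $N_0=0,\,R_{N_0}=0$ gives $B_{N_0}\equiv 1$ and recovers the wait-and-judge equation of \cite{campi2018wait}. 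Choosing $\kappa^{*}$ as the root of (\ref{eq:support}) makes the aggregate risk equal $\beta$.

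Finally I would confirm well-posedness of $\kappa^{*}$: the left-hand side of (\ref{eq:support}) is strictly decreasing in $\kappa$ (each $\kappa^{i-N}$ is non-increasing and $B_{N_0}(1-\kappa;R_{N_0})$ is non-decreasing in $\kappa$), tends to $+\infty$ as $\kappa\to 0^{+}$, and is negative at $\kappa=1$ since $\frac{1}{N+1}\binom{N+1}{N^{*}+1}=\frac{1}{N^{*}+1}\binom{N}{N^{*}}$ and $\beta<1$, so the intermediate value theorem yields a unique root in $(0,1)$. I expect the main obstacle to be the probabilistic combination in the third step: one must handle that $s^{*}$, $V$ and $R_{N_0}$ are \emph{not} independent (the first two are fixed by the training draw and only $R_{N_0}$ is conditionally fresh), justify the uniform $\beta/(N+1)$ split so that the union over the adaptive, data-dependent threshold $\kappa^{*}(s^{*},R_{N_0})$ remains valid, and verify that the Beta-against-binomial aggregation telescopes exactly into the closed form (\ref{eq:support}) rather than merely bounding it. This joint, rather than independent, use of the support count and the validation frequency is precisely the step that departs from \cite{salamati2022data,salamati2022Repetitive}.
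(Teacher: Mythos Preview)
Your two-part decomposition (deterministic bridge plus probabilistic confidence) is exactly the skeleton of the paper's own proof, so the approach is the same. The paper, however, discharges both parts by citation rather than by re-derivation: for the probabilistic core it simply invokes Theorem~4 of \cite{shang2021tac} to obtain $\mathbb{P}^{N+N_0}\{d^{*}_{N}\models \text{CCP}_{\epsilon^{*}}\}\ge 1-\beta$ with $\epsilon^{*}=1-\kappa^{*}$, and for the deterministic bridge it routes through the chain $\text{CCP}_{\epsilon^{*}}\Rightarrow \text{RCP}_{h(\epsilon^{*})}\Rightarrow K^{*}_{\text{RCP}}\le K^{*}_{N}+L\mathcal{U}^{-1}(\epsilon^{*})$ via Lemmas~3.2 and~3.4 and Remarks~3.5, 3.8 of \cite{esfahani2015performance}. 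Your direct Lipschitz-ball argument (``if $g_3=v>0$ at a point, then $g_3>0$ on a ball of radius $v/L$, whose mass is at least $\mathcal{U}(v/L)$'') is precisely the content of the uniform level-set bound $h(\epsilon)=L\,\mathcal{U}^{-1}(\epsilon)$ used there, and in fact reaches the CBF inequality~(\ref{barrier:condition 3}) for $d^{*}_{N}$ more transparently than the paper's detour through $K^{*}_{\text{RCP}}$.

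The one place your sketch is not yet a proof is the step you yourself flag. The assertion that under non-degeneracy ``the conditional tail of $V$ given $s^{*}=k$ takes the explicit binomial form $\mathbb{P}\{V>\epsilon\mid s^{*}=k\}=B_{N}(\epsilon;k)$'' is not an identity; the wait-and-judge machinery of \cite{campi2018wait} does \emph{not} produce an exact conditional law of $V$ given $s^{*}$, but rather a family of polynomial dominance relations that, after the $\beta/(N+1)$ allocation, collapse to the equation you want. Extending that argument to incorporate the independent Binomial$(N_0,V)$ validation count, and showing the resulting integral telescopes to (\ref{eq:support}), is exactly the contribution of \cite{shang2021tac}; it is correct to cite it, as the paper does, rather than to rely on the conditional-law heuristic. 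With that one statement replaced by the citation, your proposal and the paper's proof coincide.
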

\begin{proof}
From RCP we construct a chance constraint program $\text{CCP}_{\epsilon}$ for some $\epsilon$ as below:
\begin{align}
\text{CCP}_{\epsilon}: \left\{
\begin{aligned}
    \min_{d} \quad & K  \\
     \text{s.t.}  \quad &  \mathbb{P}( g_3(x,u,d)  \leq 0 ) \geq 1 - \epsilon, \\
     & \max_{z \in \{1,2,4 \}}g_z(x,d)  \leq 0, \forall x \in X,\\
    & d=(K,\lambda,\gamma,c,q,p), \\
    & K \in \mathbb{R}, c\geq 0, \lambda - \gamma \geq cT
\end{aligned}
\right.
\label{eq:CCPepsilon}
\end{align} 

Using Theorem 4 in~\cite{shang2021tac}, we know that
\begin{equation} \label{eq:CCPfea}
    \mathbb{P}^{N+N_0}(d^{*}_{N} \models \text{CCP}_{\epsilon^*}) \geq 1 - \beta
\end{equation}
where $\epsilon^* = 1- \kappa^{*}$. Then we construct a relax version of RCP, denoted by $\text{RCP}_{h(\epsilon^*)}$ as follows:
\begin{align}
\text{RCP}_{h(\epsilon^*)}: \left\{
\begin{aligned}
    \min_{d} \quad & K  \\
     \text{s.t.}  \quad &   g_3(x,u,d)  \leq h(\epsilon^*), \forall x \in X, \forall u \in U \\
     & \max_{z \in \{1,2,4 \}}g_z(x,d)  \leq 0, \forall x \in X,\\
    & d=(K,\lambda,\gamma,c,q,p), \\
    & K \in \mathbb{R}, c\geq 0, \lambda - \gamma \geq cT
\end{aligned}
\right.
\label{eq:RCPhepsilon}
\end{align} 
where $h(\cdot)$ is a uniform level-set bound defined in Definition 3.1 of~\cite{esfahani2015performance}. From result of Lemma 3.2 in~\cite{esfahani2015performance} and Equation~(\ref{eq:CCPfea}), we know that
\begin{equation} \label{eq:RCPrefea}
    \mathbb{P}^{N+N_0}(d^{*}_{N} \models \text{RCP}_{h(\epsilon^*)}) \geq 1 - \beta.
\end{equation}
We denote by $K^*_{\text{RCP}_{h(\epsilon^*)}}$ the optimal value of objective function of $\text{RCP}_{h(\epsilon^*)}$. Since any feasible solution of $\text{RCP}_{h(\epsilon^*)}$ is larger or equal to $K^*_{\text{RCP}_{h(\epsilon^*)}}$, we have
\begin{equation} \label{eq:optimalfeainRCPh}
    \mathbb{P}^{N+N_0}(  K^*_{\text{RCP}_{h(\epsilon^*)}} \leq  K^*_{N}) \geq 1 - \beta.
\end{equation}
Using Lemma 3.4 in~\cite{esfahani2015performance}, we know that
%\[
%K^*_{N,\hat{N}} \leq K^*_{\text{RCP}_{\hat{N}}} \leq K^*_{\text{RCP}_{h(\epsilon^*)}} + \mathcal{L}_{sp}h(\epsilon^*)
%\]
\[
K^*_{\text{RCP}} \leq K^*_{\text{RCP}_{h(\epsilon^*)}} + \mathcal{L}_{sp}h(\epsilon^*)
\]
where $K^*_{\text{RCP}}$ is optimal of $\text{RCP}$ and $\mathcal{L}_{sp}$ is the Slater constant defined in Assumption 3.3 of~\cite{esfahani2015performance}. As a result, we have
%\[
%    \mathbb{P}^{N+N_0}(K^*_{N,\hat{N}} \leq K^*_{\text{RCP}_{\hat{N}}} \leq K^*_{N,\hat{N}} + \mathcal{L}_{sp}h(\epsilon^*)) \geq 1 - \beta
%\]
\[
    \mathbb{P}^{N+N_0}( K^*_{\text{RCP}} \leq K^*_{N} + \mathcal{L}_{sp}h(\epsilon^*)) \geq 1 - \beta.
\]
Because $\text{RCP}$ is a min-max problem, according to Remark 3.5 in~\cite{esfahani2015performance}, $\mathcal{L}_{sp}$ can be chosen as 1. Moreover, as statement in Remark 3.8 of~\cite{esfahani2015performance}, $h(\epsilon^*)$ can be computed as $L\mathcal{U}^{-1}(\epsilon^*)$ where $L$ is Lipschitz constant of constraints. Thus we have
%\[
%    \mathbb{P}^{N+N_0}(K^*_{N,\hat{N}} \leq K^*_{\text{RCP}_{\hat{N}}} \leq K^*_{N,\hat{N}} + \text{L}_{x,u}\sqrt[n]{\epsilon^*}) \geq 1 - \beta
%\]
\begin{equation} \label{eq:condi}
      \mathbb{P}^{N+N_0}( K^*_{\text{RCP}} \leq K^*_{N} + L\mathcal{U}^{-1}(\epsilon^*)) \geq 1 - \beta .  
\end{equation}
We define
\[
E = \{ \mathcal{D} \in \Delta^{N+N_0} | K^*_{\text{RCP}} \leq K^*_{N} + L\mathcal{U}^{-1}(\epsilon^*) \}
\]
the set of datesets includes in the event of Equation~(\ref{eq:condi}). Let
\[
F = \{ \mathcal{D} \in \Delta^{N+N_0} |K^*_{N} + L\mathcal{U}^{-1}(\epsilon^*) \leq 0 \}.
\]
If $E \cap F \neq \emptyset$, we know that $K^*_{\text{RCP}} \leq 0$. Since $K^*_{N}+L\mathcal{U}^{-1}(\epsilon^*) \leq 0$, we know that the selected date set $\mathcal{D} \in F$. From Equation~(\ref{eq:condi}) we have
\[
\mathbb{P}^{N+N_0}(\mathcal{D} \in E) \geq 1-\beta.
\]
Therefore, $K^*_{\text{RCP}}\leq 0$, i.e., $\mathbf{S}_{C} \models \varphi$, is true with confidence of at least $1-\beta$. This completes the proof.
\end{proof}

\begin{remark}
Before we proceed further, let us discuss some computational considerations regarding the derived performance bound. First, we can obtain an upper bound of the Lipschitz constant $L$ in Theorem~\ref{thm:main} by using the result in Lemma 2 of~\cite{salamati2021data}. Second, in cases where the number of constraints is large, it may be challenging to accurately count the number of support constraints. However, for convex optimization problems, the support constraint is also an active constraint~\cite{boyd2004convex}. Therefore, we can use the number of active constraints in $\text{SCP}_{N}$ as an upper bound for the number of support constraints. Finally, we note that the solution of Equation~(\ref{eq:support}) may not have an analytic expression. Nevertheless, we can use bisection to numerically search for the solution using the procedure in Algorithm 1 of~\cite{shang2021tac}.
\end{remark}

\begin{algorithm}[!htp]
\caption{Data-driven system Safe  Control Synthesis for  Unknown dt-DS}\label{alg:overall} 
\KwIn{$\mathbf{S}=(X,U,f)$, $\varphi=(X_0,X_u,T)$, $\beta \in [0,1]$, $L \in \mathbb{R}$, degree $k,k' \in \mathbb{N}$ in (\ref{eq:CBFcoe}) and (\ref{eq:inputcoe}). }

Select a probability distribution $\mathbb{P}$ over $X \times U$\\
Choose number of samples $N$ and $N_0$   \\
Collect $N$ samples $\mathcal{D}_1=\{ (x_i,u_i,x'_{i}) \in X \times U \times X \mid x'_{i}=f(x_i,u_i)  \}$  \\
Collect $N_0$ additional samples $\mathcal{D}_2=\{ (x_i,u_i,x'_{i}) \in X \times U \times X \mid x'_{i}=f(x_i,u_i)  \}$  \\
Solve $\text{SCP}_{N}$ by $\mathcal{D}_1$ and obtain $d^{*}=(K^{*},\lambda^*,c^*,q^*,p^*)$ \\
Compute the  number of support constraints $N^*$ \\
Compute violation frequency $R_{N_0}$ by $\mathcal{D}_2$ \\
Compute $\kappa^{*}$ according to Equation~(\ref{eq:support}) \\

\KwOut{Controller $C$ defined by Equation~(\ref{eq:controller}) has guarantee that $\mathbf{S}_{C} \models \varphi$ with confidence $1-\beta$  if $K^{*}_{N}+L\mathcal{U}^{-1}(1-\kappa^{*}) \leq 0$}
\end{algorithm}

Now we discuss how to properly select sample numbers $N$ and $N_0$ to achieve confidence $\beta$. For each $N$, we can solve $\text{SCP}_{N}$ repetitively to estimate optimal objective value $\widehat{K}^*_{N}$ and support constraints number $\widehat{N}^*$ in expectation. According to analysis in~\cite{shang2021tac}, the expectation of $R_{N_0}/N_0$ is lower than $N^*/N$ with high confidence. Therefore, we adopt $\widehat{R}_{N_0}=N_0 \times \widehat{N}^*/N$ as estimated violation frequency.

Since $\mathcal{U}(\cdot)$ is an increasing function, to guarantee Equation~(\ref{eq:check_condition}), we have
\begin{equation} \label{eq:mid_condtion}
    \kappa^* \geq 1-\mathcal{U}\left(-\frac{\widehat{K}^*_N}{L}\right).
\end{equation} 
We denote by $g^{\beta}_{N,N_0}(\kappa,N^*,R_{N_0})$ the LHS of Equation~(\ref{eq:support}). From~\cite{shang2021tac} we know that function $g^{\beta}_{N,N_0}(\cdot)$ is decreasing w.r.t. $\kappa$. Therefore, we can pick $N$ and $N_0$ such that 
\begin{equation} \label{eq:pick_condition}
    g^{\beta}_{N,N_0}\left(1-\mathcal{U}\left(-\frac{\widehat{K}^*_N}{L}\right),\widehat{N}^*,\widehat{R}_{N_0}\right) \geq 0.
\end{equation}

In summary,  given a desired confidence bound $\beta$, we can determine the $N$ and $N_0$ by following steps:
\begin{enumerate}
    \item Pick $N$ and $N_0$ arbitrary;
    \item Calculate $\widehat{K}^*_N$, $\widehat{N}^*$, $\widehat{R}_{N_0}$ as discussion above;
    \item If Equation~(\ref{eq:pick_condition}) holds, then choose the current $N$ and $N_0$; otherwise increase $N$ and $N_0$ and return to step 2). 
\end{enumerate} 

It is important to note that while following the steps outlined above to select values for $N$ and $N_0$, it is possible that condition~(\ref{eq:check_condition}) may not be satisfied. In such cases, we must independently sample a new set of $N+N_0$ data and solve a new instance of $\text{SCP}_{N}$. In order to reduce the time required, we can over-approximate $K^*_N$ and $N^*$ to obtain a more conservative bound. The overall steps involved in this process are summarized in Algorithm~\ref{alg:overall}.

\section{Case Study of Room Temperature Control} \label{section:casestudy}
To illustrate the efficiency of the proposed approach, we adopt the room temperature control problem from \cite{salamati2021data}. 
Specifically, we control room with a heater whose dynamic function is given by  
\[
\mathbf{S} : x(t+1)=x(t)+\tau_s ( \alpha_e (T_e-x(t)) + \alpha_h (T_h-x(t)) u(t) ), 
\]
where $T_e = 15$, $T_h = 45$, $\alpha_e = 8 \times 10^{-3}$, $\alpha_h = 3.6 \times 10^{-3}$ and $\tau_s = 5$. 
We define $X_0 = [24,25]$, $X_u =[22.5,23] \cup [26,26.5]$, $X=[22.5,26.5]$, $U=[0,1]$ and $T=5$. 
We assume that the dynamic of system is unknown and the objective is to synthesize a controller under which the room temporal $x(t)$ in  a comfortable region between $23^{\circ}$ and $26^{\circ}$  within time horizon $T=5$  with confidence of $95\%$.

For both CBF and controller function, we consider candidate polynomial functions with degree $k=k_1=4$. 
Then the   CBF and controller function are in the form of 
\[
\mathcal{B} (q,x) = \mathbf{x} \mathbb{Q} \mathbf{x}^{\top} \text{ and } C(p,\mathbf{T}) = \mathbf{x} \mathbb{P} \mathbf{x}^{\top},
\]
where $\mathbf{x}=[1,x,x^2]$ is a row vector  and 
\begin{equation}
    \mathbb{Q}= 
   \begin{bmatrix}
   q_0 & \frac{q_1}{2} & \frac{q_2}{3} \\
   \frac{q_1}{2} & \frac{q_2}{3} & \frac{q_3}{2} \\
    \frac{q_2}{3} & \frac{q_3}{2} & q_4 
    \end{bmatrix},
    \mathbb{P} = 
    \begin{bmatrix}
   p_0 & \frac{p_1}{2} & \frac{p_2}{3} \\
    \frac{p_1}{2} & \frac{p_2}{3} & \frac{p_3}{2} \\
   \frac{p_2}{3} & \frac{p_3}{2} & p_4 
    \end{bmatrix} 
\end{equation}
are two coefficient matrices. 
By enforcing $\Vert  \mathbb{Q} \Vert \leq 0.1$ and $\Vert  \mathbb{P} \Vert \leq 0.05$, the Lipschitz constant $L$ can be upper bounded by $11.63$. We choose uniform distribution to sample state-input pairs. Since the state-input space is a $2$-dimensional rectangular,  function $\mathcal{U}(r)$ is computed by
$\mathcal{U}(r)=\frac{\pi}{16}r^2$.

\textbf{Results by Existing Method: }
First, we use the results of \cite{salamati2021data} as stated in Theorem~\ref{lemma:prior} to solve the data-driven control synthesis problem.  Let $\beta=0.05$. 
By  choosing $\epsilon=7.492\times 10^{-6}$, we have $L\mathcal{U}^{-1}(\epsilon)=0.07$. 
Therefore, the minimum number of samples needed for the scenario convex program to ensure the confidence bound is $N=2733296$. 
We then solve the $\text{SCP}_{N}$ with acquired samples and obtain the optimal objective value $K^{*}_{N}=-0.1486$. 
Since $K^{*}_{N}+L\mathcal{U}^{-1}(\epsilon)=-0.0786 \leq 0$, we know that $\mathbf{S}_{C} \models \varphi$ is ensured with a confidence of at least $1-\beta = 95\%$. 

\textbf{Results by Our Method: }
Now we consider the posteriori method proposed in this paper. 
We also select $\beta=0.05$. Note that there is no need to fix $\epsilon$ in a priori. 
Here, we choose $N=140000$ to form the SCP and choose $N_0=70000$ for the validation test.
Then we solve $\text{SCP}_{N}$ and obtain $K^*_{N}=-0.149$, $\lambda^*=-68.14$, $\gamma^*=-69.64$ and $c^*=0.2998$. 
We use number of active constraints, which is $1$, to upper bound the number of support constraints. 
In the validation test, the violation frequency is $0$, which essentially means that the solution to the SCP is already good enough to deserve higher confidence bound. 
The solution of Equation~(\ref{eq:support}) is $\kappa^{*}=0.9999723$. 
Since $K^*_{N}+L\mathcal{U}^{-1}(1-\kappa^{*})=-0.011 \leq 0$,  we know that $\mathbf{S}_{C} \models \varphi$ is ensured with a confidence of at least $1-\beta = 95\%$.  
The CBF computed from $\text{SCP}_{N}$ is 
\begin{align}
 \mathcal{B} (x)=&1.948\times 10^{-3}x+0.2395x^2 \nonumber\\
 &-3.841\times 10^{-2}x^3+9.740\times 10^{-4} x^4 \nonumber
\end{align} 
and the obtained controller is
\begin{align}
C(x)=&1.208\times 10^{-5} +9.768 \times 10^{-2} x -3.438 \times 10^{-3} x^2 \nonumber\\
 & +2.418\times 10^{-5} x^3+4.594 \times 10^{-7} x^4. \nonumber
\end{align}  
The constructed $\mathcal{B} (x)$ is shown in Figure~\ref{fig:barrier}. From Figure~\ref{fig:barrier} we know that conditions~(\ref{barrier:condition 1}) and (\ref{barrier:condition 2}) are satisfied.
Since we know the underlying dynamic of system $\mathbf{S}$, we also draw constraint $g_3(\cdot)$ in Figure~\ref{fig:con3}, which shows that condition~(\ref{barrier:condition 3}) is also satisfied. Therefore, $\mathcal{B} (x)$ is indeed a CBF and $\mathbf{S}_{C} \models \varphi$, i.e., controlled system is safe.

\begin{figure} 
  \centering
  \includegraphics[width=6cm]{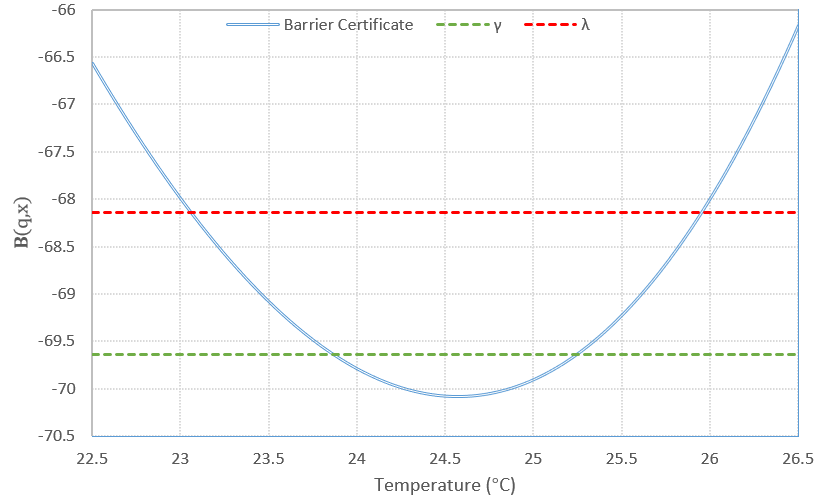}
  \caption{Computed $\mathcal{B} (x)$ of $\text{SCP}_{N}$. The green and red dashed line represents solution $\gamma^*$ and $\lambda^*$, respectively.}
 \label{fig:barrier}
\end{figure}

\begin{figure} 
  \centering
  \includegraphics[width=6cm]{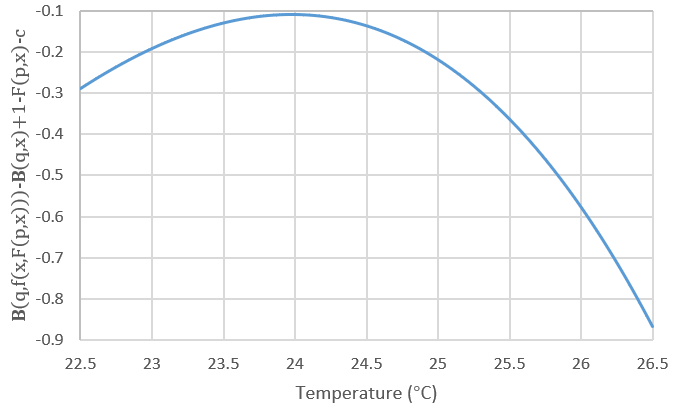}
  \caption{Satisfaction of condition~\ref{barrier:condition 3} of computed $\mathcal{B} (x)$ and $C(x)$.}
 \label{fig:con3}
\end{figure}

In the above example, we get zero violation frequency for the experiment. To further show the average performance, we run  Algorithm~\ref{alg:overall} for $100$ times with $N=140000$ and $N_0=70000$. The number of active constraint is always $1$. We record  $R_{N_0}$ in the $100$ runs in Table~\ref{table:record}. As Table~\ref{table:record}, $R_{N_0}=0$ has the highest occurrence frequency. It has been shown in~\cite{10.1007/s10107-019-01446-4} that we have  high confidence that $R_{N_0}/N_0$ can not be much higher than $N^*/N$, where $N^*$ is number of support constraints. Since $N^* \leq 1$ mostly in $\text{SCP}_{N}$, we know that $R_{N_0}$ cannot much higher than $0.5$ with high confidence. Therefore, the outcome of Table~\ref{table:record} is consistent with the result in~\cite{10.1007/s10107-019-01446-4}. Moreover, the expected number of samples $N_{e}$ needed for our method  is $N_{e}=(N+N_0)/0.42=500000$.

\begin{table}[!htp]
  \begin{center}
    \caption{Record of $R_{N_0}$ for 100 runs of Algorithm~\ref{alg:overall}}
    \label{table:record}
    \begin{tabular}{|c|c|c|c|c|c|} % <-- Alignments: 1st column left, 2nd middle and 3rd right, with vertical lines in between
    \hline
     $R_{N_0}$  & 0 & 1 & 2 & 3  & 6\\
      \hline
      frequency  & 42 & 34 & 17 & 5 & 2 \\
      \hline
    \end{tabular}
  \end{center}
\end{table}

According to the above experiments, 
our approach uses a significantly smaller sample size compared to the method proposed in~\cite{salamati2021data}. This is mainly because the number of support constraints is much lower than the number of decision variables. This observation suggests that the complexity of the SCP is considerably lower than we initially assumed. Moreover, the frequency of violations is low in most cases, indicating that our solution provides the desired level of confidence. Collectively, these findings highlight the effectiveness of our approach in addressing the SCP with a reduced sample size while ensuring the same level of confidence.

\section{Conclusion}\label{sec:con}

In this work, we presented a new approach for synthesizing safety controllers for unknown dynamic systems using data. Our method involves solving a scenario convex program formed by the data, followed by a validity test to improve confidence. To achieve this, we derived a new overall performance bound that combines the information from the original sample data, support constraints, and violation frequency. Our experiments demonstrated that our approach is more sample-efficient than existing methods. In this work, we only consider deterministic dynamic systems. In the future, we plan to extend our results to the stochastic case.  

\bibliographystyle{ieeetr}
\bibliography{STL}   

\end{document}